\newtheorem{theorem}{Theorem}
\newtheorem{proposition}{Proposition}
\newtheorem{lemma}{Lemma}
\newtheorem{definition}{Definition}
\newtheorem{remark}{Remark}
\def\bra#1{\left\langle #1\right|}
\def\ket#1{\left| #1\right\rangle}
\def\bracket#1#2{\left\langle #1 | #2 \right\rangle}
\def\half{\frac{1}{2}}
\def\del#1{\partial_{#1}}
\def\nn{\nonumber}
\def\beq{\begin{equation}}
\def\eeq{\end{equation}}
\def\bea{\begin{eqnarray}}
\def\eea{\end{eqnarray}}
\def\SDsum{\supset \hspace{-1em}\hspace{-1pt}+}
\def\alg{{\mathfrak g}}
\def\calg{\tilde{\mathfrak g}_{\ell}}
\def\Rep{ {\cal V}^{a,b}_{AB} }
\renewcommand{\arraystretch}{1.3}
\begin{document}

\begin{center}
 {\huge Galilean conformal algebras in two spatial dimension}

\vspace{1cm}

Naruhiko Aizawa and Yuta Kimura

\bigskip
Department of Mathematics and Information Sciences, \\ 
Graduate School of Science, Osaka Prefecture University, \\
Nakamozu Campus, Sakai, Osaka 599-8531, Japan.

\end{center}
\begin{abstract}

 A class of infinite dimensional Galilean conformal algebra  in 
$ (2+1) $ dimensional spacetime is studied. Each member of the class, 
denoted by $ \alg_{\ell}, $ is labelled by the parameter $ \ell. $ 
The parameter $ \ell $ takes a spin value, i.e., 1/2, 1, 3/2, \dots.  
We give a classification of all possible central extensions of $ \alg_{\ell}. $ 
Then we consider the highest weight Verma modules over $ \alg_{\ell}$ with the central extensions.  
For integer $ \ell $ we give an explicit formula of Kac determinant. 
It results immediately that  the Verma modules are irreducible for nonvanishing highest weights. 
It is also shown that the Verma modules are reducible for vanishing highest weights. 
For half-integer $ \ell $ it is shown that all the Verma module is reducible. 
These results are independent of the central charges.
\\

Keywords: Infinite dimensional Lie algebra, \ Representation theory

MSC2010: 17B65,\ 17B10
\end{abstract}

\section{Introduction}
\label{Sec:Intro}

 The study of possible kinematical invariance lager than Galilei group in nonrelativistic physics  
introduced the notion of Schr\"odinger group~\cite{Nie,Hagen}. 
The group contains dilatation and nonrelativistic conformal transformation in addition to 
those of Galilei group. 
It was followed by further enlargement of the group by geometric consideration~\cite{HavPre,NdORM1,NdORM2} 
so that we have at present some distinct kinematic groups for nonrelativistic systems which 
contains the Schr\"odinger group as a special case. 
They are regarded as nonrelativistic analogue of the conformal groups $SO(n,2) $ and referred to 
as Galilean conformal groups (also called conformal Galilei groups) and conformal Newton-Hooke groups. 

  Those nonrelativistic conformal groups  and their Lie algebras are non-semisimple and finite dimensional. 
Recent observation of nonrelativistic AdS/CFT correspondence caused a renewed interest in 
those algebraic structures. It was observed first for the Schr\"odinger algebra~\cite{Son,BaMcG} and then 
for the Galilean conformal algebras~\cite{BaGop,AlDaVa,MT}. 
Other areas of physics where the nonrelativistic conformal groups play certain roles 
range from classical mechanics to quantum field theory (see for example the references in \cite{AiIsKi,RoUn}). 
Among others, the Galilean conformal algebras in $(2+1)$ dimensional spacetime are of particular 
interest, since they have the socalled \textit{exotic} central extension. 
That is the central extension existing only in this particular dimension of spacetime and 
has different structure from those in other dimensional spacetime~\cite{MT,StZak,LSZ,LSZ2}. 
We remark that the exotic central extension also exist for the (non conformal) Galilei algebra 
and has been studied extensively. Further detail of this, see for example~\cite{Horv} and 
references therein. 

  Another interesting aspect of the Galilean conformal algebras is extensions to infinite 
dimensional Lie algebras. 
Several different infinite dimensional extensions have been introduced so far~\cite{BaGop,AlDaVa,MT,Henkel,RoUn,BGMM,HR,HR2,HSSU}.  
The Virasoro algebra, one of the most important algebra is mathematical physics, is a subalgebra of all 
those extensions. Some of the extensions contains two or more copies of the Virasoro algebras. 
Probably the best known example of such extensions is the  Schr\"odinger-Virasoro algebra~\cite{Henkel,RoUn}. 
The algebra is applied to some statistical systems~\cite{Henkel2,HeUn} and partial differential equations~\cite{CheHen}. 
Representation theory of the Schr\"odinger-Virasoro algebra has been studied extensively~\cite{RoUn,SchRep1,SchRep2,SchRep3,SchRep4}. 
Other varieties of infinite dimensional Galilean conformal algebras appear in various contexts such as 
topologically massive gravity, nonlinear partial differential equations, Navier-Stokes equations 
and so on~\cite{DuHoNewCar,HoKuNi,BagchiTop,Mukh,CheHen2,BagchiTension,Hosseiny,BaBaMe}. 
The same algebraic structure is discussed not in the nonrelativistic conformal symmetry but in completely another contexts. 
As the isometry of flat Minkowski space at null infinity, 
the algebra is called BMS (Mondi-Metzner-Sachs) algebra~\cite{BagchiBMS,BagFare,BarGomGon}. 
In a relation to vertex operator algebras it is called $ W(a,b)$-algebra~\cite{ZhangDong,LiuGaoZhu,GaoJiangPei2,Radobolja}. 

   In the present work we investigate the infinite dimensional Galilean conformal algebras introduced by 
Martelli and Tachikawa~\cite{MT}. We shall focus on the algebras defined in $(2+1)$ dimensional spacetime 
because of the following reasons: i) as already mentioned, finite dimensional counterparts have a particular central 
extension only in this dimension of spacetime. ii) there are several publications discussing the representation theory and 
physical applications of the same algebras defined in $(1+1)$ dimensional spacetime~\cite{BGMM,HR,HSSU,HoKuNi,BagchiTop,Hosseiny,BagchiBMS,BagFare,BarGomGon,ZhangDong,LiuGaoZhu,GaoJiangPei2,Radobolja}. 
However, the case of $(2+1)$ dimensional counterparts are not studied yet. 
Our main results are a classification of all possible central extensions and 
a criterion for irreducibility of Verma modules. 
The latter is a consequence of our explicit formula of Kac determinant which is also our main result. 
Some of the preliminary results (for the simplest member of the algebras defined in $(2+1)$ dimension) 
have already been reported elsewhere~\cite{Naru1,Naru2}. 

This article is organized as follows: 
In the next section we give a definition of the algebras. Each algebra is labelled by 
a positive integer or a positive half-integer. 
We try to classify central extensions of the algebras in \S \ref{sec:extension}. 
It will be shown that the exotic central extension is not allowed for 
the infinite dimensional algebras. 
In \S \ref{sec:verma} we define Verma modules over the algebra labelled by a positive integer, then study their reducibility. 
An explicit formula of Kac determinant is given. In \S \ref{sec:Verma2} Verma modules over the algebra 
labelled by a positive half-integer is defined and it will be shown that all the Verma modules is reducible.

%
%
%
\setcounter{equation}{0}
\section{Definition and structure}

  We employ the definition of the infinite dimensional Galilean conformal algebras (GCA) introduced in~\cite{MT}. 
The algebras in \cite{MT} are a natural extension of the finite dimensional counterparts defined in~\cite{NdORM1,NdORM2}.   
These algebras are labelled by two parameters $(d,\ell) $  where $d$ is interpreted as a dimension of 
space on which the GCA acts as an infinitesimal transformation. 
The parameter $ \ell, $ sometimes called ``spin", takes a positive integer or a positive 
half-integer value. 
The smallest instance $\ell = \half $ corresponds to the twisted Schr\"odinger-Virasoro algebra introduced in~\cite{Henkel,RoUn}. 
The mathematical and physical aspects of the algebra labelled by $(d,\ell)=(1,1) $ has been investigated in literatures~\cite{BGMM,HR,HSSU,HoKuNi,BagchiTop,BagchiBMS,BagFare,BarGomGon,ZhangDong,LiuGaoZhu,GaoJiangPei2,Radobolja} 
(see also \cite{Hosseiny} for $d = 1 $ and any $\ell$). 
In the present work we restrict ourselves to the algebras with $ d = 2. $   
This class of algebra has generators of three different types: 
$ L_m, \; J_m, \; P_r^i $ with $ m \in {\mathbb Z},\; r \in {\mathbb Z} + \ell $ and $ i,j = 1, 2.$ 
They are subject to the relations:
\bea
  & & [L_m, L_n] = (m-n) L_{m+n}, \qquad [J_m, J_n] =   
      [P_r^i, P_s^j] = 0, 
    \nn \\
  & & [L_m, J_n] = -n J_{m+n}, \qquad\qquad\; [L_m, P_r^i] = (\ell m-r) P_{m+r}^i, 
   \nn \\
  & & [J_m, P_r^i] = \sum_j \epsilon_{ij} P_{m+r}^j,
      \label{defrel}
\eea
where $ \epsilon_{ij} $ is the antisymmetric tensor with $  \epsilon_{12} =  1. $ 
The $ \langle L_m \rangle $ sector is a centerless Virasoro subalgebra and the $ \langle P_m^i \rangle $ 
sector is an Abelian ideal. The structure of the algebra is as follows:
\[
  (\; \langle L_m \rangle \; \SDsum \; \langle J_m \rangle \; ) \; 
  \SDsum \; \langle P_m^i \rangle, 
\]
where $ \SDsum $ denotes the semidirect sum. 
We denote the algebra for a fixed value of $ \ell$  by $\alg_{\ell}.$ 
The algebra $ \alg_{\ell} $ is realized in terms of the space-time coordinates $ (t, x_1, x_2)$ \cite{MT}:
\bea
  & & L_m = -t^{m+1}\del{t} - \ell (m+1) t^m \sum_i x_i \del{i}, 
    \nn \\
  & & P_r^i = -t^{r+\ell} \del{i}, \qquad
      J_m = -t^m ( x_1 \del{2} - x_2 \del{1}). \label{g-real}
\eea
There exists an algebraic anti-automorphism $ \omega : \alg_{\ell} \to \alg_{\ell} $ given by
\beq
   \omega(L_m) = L_{-m}, \qquad \omega(J_m) = -J_{-m}, \qquad \omega(P_r^i) = P_{-r}^i.
   \label{antiauto}
\eeq
It will be used later in the consideration of highest weight representations. 

  The subalgebra 
$ \langle \, L_0, L_{\pm 1}, J_0, P^i_{-\ell}, P^i_{-\ell+1}, \dots, P^i_{\ell} \, \rangle $ 
is isomorphic to the finite dimensional Galilean conformal algebras of \cite{NdORM1,NdORM2}. 
It is known that the finite dimensional Galilean conformal algebras have different central extensions 
depending on the parity of $ 2\ell $ \cite{MT,StZak,LSZ}. Especially, the central extension for 
integer $ \ell $ (called the exotic extension) exists only for $ d = 2. $ All the central extensions make 
the commuting subalgebra $ \langle P_r^i \rangle $ noncommutative. 
On the other hand, central extensions of the $d=2$ infinite dimensional algebra $\alg_{\ell}$ 
have not studied yet. 

%
%
%
\setcounter{equation}{0}
\section{Central extensions of $\alg_{\ell}$}
\label{sec:extension}

It may be natural to ask whether the infinite dimensional algebra $ \alg_{\ell} $ also 
has the central extensions. To answer the question, we try to classify all 
possible central extensions of $\alg_{\ell}. $ 
Classification of the central extensions for the $ d=1 $ infinite dimensional Galilean conformal algebras 
has been done in \cite{Hosseiny}. 
The following theorem is our main result in this section:
\begin{theorem}
\label{theo:ext}
  All possible central extensions of $ \alg_{\ell} $ are listed as follows:
  \bea
   & & [L_m, L_n] = (m-n) L_{m+n} + \frac{c_1}{12} m(m^2-1) \delta_{m+n,0},
    \nn \\
   & & [J_m, J_n] = c_2 m \delta_{m+n,0}, \nn \\
   & & [L_m, J_n] = -nJ_{m+n} + c_3 m^2 \delta_{m+n,0}, \nn
  \eea
  where $ c_1, c_2 $ and $ c_3 $ are independent central charges. 
\end{theorem}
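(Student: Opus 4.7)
The strategy I would pursue is to add to each commutator in (\ref{defrel}) the most general $c$-number term allowed by antisymmetry, then systematically extract constraints from the Jacobi identities applied to triples drawn from the three generator families. The theorem will follow once I show (a) that the displayed $\alpha$ and $\beta$ terms satisfy all Jacobi identities and are not coboundaries, and (b) that every other cocycle is either identically zero or absorbable by a linear redefinition of the generators.

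A welcome simplification is that the Jacobi constraints (\ref{ZJrel})--(\ref{FWrel1}) block-decompose: $Z^{(L)}_{mn}$ is constrained only by the $(L,L,L)$ identity; $Z^{(J)}_{mn}$ only by $(L,J,J)$; $Y^{ij}_{mn}$ only by $(L,P,P)$ and $(J,P,P)$; and the mixed cocycles $C,F,W$ couple to each other through $(L,J,P)$. I would therefore analyze each block independently. For the $Z^{(L)}$ block the standard Gelfand--Fuks computation gives $\frac{\alpha}{12}m(m^2-1)\delta_{m+n,0}$, and it is easy to check that this is not a coboundary. For $Z^{(J)}$, setting $m=0$ in the corresponding constraint pins the support to $k+n=0$, and a further pass forces linearity in $m$, yielding $\beta m\delta_{m+n,0}$. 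For $C_{mn}$, setting $k=0$ in (\ref{Crel}) expresses $C_{mn}$ entirely in terms of $C_{0,m+n}$, which is immediately absorbable by redefining $J_m \mapsto J_m - m^{-1}C_{0m}$.

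The main difficulty, and the conceptually interesting part, is the vanishing of the $[P,P]$ cocycle $Y^{ij}_{mn}$. In finite dimensions the exotic extension (\ref{CextFinite}) does exist, so the entire infinite tower of Virasoro modes must be what rules it out. I would separate the cases $i=j$ and $i\neq j$: for $i=j$, the antisymmetry combined with the $m=0$ case of (\ref{Yrel}) localizes the support to $n+k=0$, after which the general-$m$ case forces the surviving one-variable function to vanish. For $i\neq j$, (\ref{Yrel3}) forces symmetry in $(n,k)$, reducing $Y^{ij}_{nk}$ to $\epsilon_{ij}f(n,-n)\delta_{n+k,0}$, and then two independent specializations of (\ref{Yrel}) yield the incompatible relations $f(m,-m)=f(0,0)$ and $f(m,-m)=-2f(0,0)$, hence $f\equiv 0$. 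This case split is where I would expect to spend most of the effort.

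The remaining cocycles $F^i$ and $W^i$ fall out cleanly: (\ref{Wrel}) at $m=0$ shows that $W^i_{nk}$ depends only on $n+k$, and then (\ref{FWrel1}) expresses $F^i_{mn}$ in terms of the same one-variable function. Compatibility with (\ref{Frel}) is automatic, and the whole extension is absorbed by the shift $P_n^i \mapsto P_n^i - \sum_j \epsilon_{ij} W^j_{0n}$. Collecting these observations yields precisely the two independent central charges $\alpha$ and $\beta$ claimed in the theorem.
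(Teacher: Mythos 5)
Your proposal follows essentially the same route as the paper: block-decompose the Jacobi constraints, obtain the Virasoro and $\beta m\delta_{m+n,0}$ cocycles, kill $Y^{ij}$ by the same $i=j$ / $i\neq j$ case split ending in the incompatible relations $f(m,-m)=f(0,0)$ and $f(m,-m)=-2f(0,0)$, and absorb $C$ and $(F,W)$ by the same redefinitions of $J_m$ and $P_n^i$. The only cosmetic difference is that the paper derives $f(m,-m)=f(0,0)$ from (\ref{Yrel3}) rather than from a second specialization of (\ref{Yrel}), and $C_{mn}$ is in fact decoupled from $F,W$ since it is central; neither point affects the argument.
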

Contrary to the finite dimensional Galilean conformal algebras, 
the infinite dimensional counterparts do not have the central extensions which make 
the abelian subalgebra $ \langle P_r^i \rangle $ noncommutative. 
The situation similar to this is also observed in the Galilean line group \cite{MacMcWick}. 

\begin{proof}
We take a pedestrian way to prove the theorem. 
We add the central terms to each commutators in (\ref{defrel}). 
\bea
  & & [L_m, L_n] = (m-n) L_{m+n} + Z_{mn}^{(L)}, \qquad [J_m, J_n] = Z_{mn}^{(J)}, \qquad  
      [P_r^i, P_s^j] = Y_{rs}^{ij}, 
   \nn \\
  & & [L_m, J_n] = -n J_{m+n} + C_{mn}, \qquad\qquad\; [L_m, P_r^i] = (\ell m-r) P_{m+r}^i + F_{mr}^{\ i}, 
   \nn\\
  & & [J_m, P_r^i] = \sum_j \epsilon_{ij} P_{m+r}^j + W_{mr}^{\ i},  \label{D2L1IC} 
\eea
where $  Y_{rs}^{ij} = -Y_{sr}^{ji} $ are required from the antisymmetry of the bracket. 
Other central terms are antisymmetric with respect to $ m, n, r $ and $s.$ 
We impose the Jacobi identities on the extended commutation relations (\ref{D2L1IC}).  
The Jacobi identity for $ \{ L_k, L_m, L_n \} $ yields the relation:
\beq
 (k-m) Z_{k+m\; n}^{(L)} + (m-n) Z_{m+n\; k}^{(L)} + (n-k) Z_{n+k\; m}^{(L)} = 0. 
 \label{ZLrel}
\eeq
The following relations are obtained from the Jacobi identities for $ \{ L_m, L_k, L_n \} $ and 
$ \{ L_m, J_k, J_n \}:$ 
\bea
  & & (m-k) C_{m+k\;n} + n C_{m\;k+n} - n C_{k\;n+m} = 0, 
    \label{Crel} \\
  & & k Z_{k+m\; n}^{(J)} - n Z_{n+m\; k}^{(J)} = 0.
    \label{ZJrel}
\eea
For $ \{ L_m, P_r^i, P_s^j \} $ and $ \{ J_m, P_r^i, P_s^j \} $ we have the relations:
\bea
   & & (\ell m-r) Y_{m+r\;s}^{ij} - (\ell m-s) Y_{m+s\;r}^{ji} = 0, 
    \label{Yrel} \\
   & & Y_{m+r\;s}^{11} + Y_{m+s\;r}^{22} = 0, \qquad 
   Y_{m+r\;s}^{ij} - Y_{m+s\;r}^{ij} = 0, \quad i \neq j, 
     \label{Yrel3}
\eea
Finally for $ \{ L_m, L_n, P_r^i \}, \; \{ L_m, J_n, P_r^i \} $ and $ \{ J_m, J_n, P_r^i \} $ 
the following relations are obtained:
\bea
  & & (m-n) F_{m+n\;r}^{\ i} - (\ell n-r) F_{m\; n+r}^{\ i} + (\ell m-r) F_{n\; m+r}^{\ i} = 0,
     \label{Frel} \\
  & &  F_{m\;n+r}^{\ i} = \sum_j \epsilon_{ij} ( n W_{m+n\;r}^{\ j} - (\ell m-r) W_{n\;m+r}^{\ j}).
       \label{FWrel1} \\
  & & W_{m\;n+r}^{\ i} - W_{n\;m+r}^{\ i} = 0.
     \label{Wrel}
\eea
No other relations are deduced from the Jacobi identities. 
It is observed from (\ref{ZLrel}) that the central element $ Z_{mn}^{(L)} $ for the Virasoro subalgebra decouples from others. 
It follows that the well-known central extension of the Virasoro algebra remains true for $\alg_{\ell}.$ 

  The central terms $ Z_{mn}^{(J)} $ and $ C_{mn} $ are also decouple from others. 
We show that the relations (\ref{Crel}) and (\ref{ZJrel}) give nontrivial extensions. 
Set $ k = 0 $ in (\ref{Crel}). We then have $ C_{mn} = n(m+n)^{-1} C_{0\;m+n} $ if $ n+m \neq 0. $ 
This extension is absorbed in $ J_m $ by the redefinition $ J_m' =  J_m - m^{-1} C_{0m} $ 
so that this is trivial. The only possibility for $ C_{mn} $ is $ C_{mn} = \delta_{m+n,0} \, g(m) $ with 
$ g(m) = -g(-m). $  It follows immediately that $ g(0) = 0. $  
Substitution of this into (\ref{Crel}) yields the relation
\beq
  (m-k)g(m+k) + (m+k) (g(k)-g(m)) = 0. \label{Crel2}
\eeq
Setting $ k = 1 $ in (\ref{Crel2}) one has the linear recurrence relation for $ g(m):$
\beq
  (m-1) g(m+1) - (m+1) (g(m)-g(1)) = 0. \label{Crel3}
\eeq
The solution space of  (\ref{Crel3}) is at most two dimensional since one may obtain $g(m)$ if one knows 
$ g(1) $ and $ g(2). $ It is easy to see that $ g(m) = m, m^2 $ are the two independent solutions. 
The general solution to (\ref{Crel3}) is $ g(m) = \mu m + \nu m^2. $ We set $ \mu = 0 $ in this paper. 

  Now we turn to $ Z_{mn}^{(J)}. $ Set  $ m=0 $ in  (\ref{ZJrel}) then we have the identity
\[
  (k+n) Z^{(J)}_{kn} = 0,
\]
which implies that $ Z^{(J)}_{kn} = \delta_{k+n,0} f(k) $ with $ f(k) = -f(-k). $ 
With this form of $ Z^{(J)}_{kn} $ the relation (\ref{ZJrel}) requires the identity for $ f(k,k):$ 
\[
   kf(n) -n f(k) = 0.
\]
This is solved by $ f(n) = const \times n. $ Thus $ Z^{(J)}_{kn} $ gives a nontrivial extension.

  Next we show that the relations from (\ref{Yrel}) to (\ref{Wrel}) give no nontrivial extensions. 
We start with showing that  $ Y_{rs}^{ii} = 0. $ Set $ j=i $ and $ m=0 $ in (\ref{Yrel}), then we have 
\[
   (r+s) Y_{rs}^{ii} = 0.
\]
It follows that $ Y_{rs}^{ii} = \delta_{r+s,0} f^i(r) $ with $ f^i(r) = -f^i(-r). $ 
Substitution this into (\ref{Yrel}) yields 
\beq
  (\ell m- r) f^i(m+r) + ( (\ell+1) m + r ) f^i(r) = 0. \label{Yrel2}
\eeq
We treat the cases of integer $ \ell $ and half-integer $ \ell $ separately. 

\noindent
(i) if $ \ell $ is an integer, then so is $r$ and $ f^i(0) = 0. $ 
Set $ r=0 $ in (\ref{Yrel2}) we have 
\[
  \ell f^i(m) + (\ell+1) f^i(0) = 0, 
\]
for any nonvanishing $ m. $ This means that $ f^i(m) = 0 $ for all $m.$ 

\noindent
(ii) if $ \ell $ is a half-integer, then so is $r. $ Set $ m = 1$ in (\ref{Yrel2}) we have 
\beq
  (\ell-r) f^i(r+1) + (\ell+1+r) f^i(r) = 0. \label{Yrel4}
\eeq
This relates any $ f^i(r) $ with $ f^i(\ell). $ By setting $ r = \ell $ in (\ref{Yrel4}) 
we see that $ f^i(\ell) = 0. $ Thus $ f^i(r) = 0 $ for any half-integer $ \ell. $ 

\noindent 
This completes the proof of $ Y_{rs}^{ii} = 0. $ 
We now consider $ Y_{rs}^{ij} $ with $ i \neq j. $ 
Setting $ m = 0 $ in (\ref{Yrel3}) one sees that $ Y_{rs}^{ij} = - Y^{ji}_{rs}. $ 
This implies that $ Y_{rs}^{ij} $ has the structure 
$ Y_{rs}^{ij} = \epsilon_{ij} y(r,s) $ with a symmetric $y(r,s). $ 
It then follows from (\ref{Yrel}) that 
\beq
  (\ell m-r) y(m+r,s) + (\ell m-s) y(m+s,r) = 0.  \label{Yrel5}
\eeq
Set $ m = 0 $ in this equation. Then 
\[
  (r+s) y(r,s) = 0,
\]
so that $ y(r,s) = \delta_{r+s,0} h(r) $ with $ h(r) = h(-r). $ 
Put this form into (\ref{Yrel}) and (\ref{Yrel5}) we have the relations:
\bea
  & & h(m+r) = h(r), \nn \\
  & & (\ell m-r) h(m+r) + ((\ell+1)m+r) h(r) = 0.
\eea
The first equation implies that $ h(r) $ is a constant. From the second equation 
one may see that the constant is zero. This completes the proof of $ Y_{rs}^{ij} = 0. $ 

 Finally we show that equations (\ref{Frel}), (\ref{FWrel1}) and (\ref{Wrel}) do not produce any 
nontrivial extensions. 
Set $ m = 0 $ in (\ref{Frel}) and (\ref{FWrel1}):
\bea
  & & (n+r) F_{nr}^{\; i} = -(\ell n-r) F_{0\,n+r}^{\; i}, \nn \\
  & & F_{0\,n+r}^{\; i} = (n+r) \sum_j \epsilon_{ij}  W_{nr}^{\; j}. \nn 
\eea
If $ n +r \neq 0, $ the first equation gives the central extension
\beq
  F_{nr}^{\; i} = -\frac{\ell n- r}{n+r} F_{0\,n+r}^{\; i}. \label{F-extension-trivial}
\eeq
Together with the second equation, one can see without any difficulty that the extension (\ref{F-extension-trivial}) 
is absorbed by the redefinition 
\[
  P_r^i \ \to \ P_r^i - r^{-1} F_{0r}^{\; i}. 
\]
Thus the extension (\ref{F-extension-trivial}) is trivial and the only possibility of the nontrivial extension 
is $ n + r = 0. $ This relation is never true if $ \ell $ is a half-integer. Thus we have shown that 
there is no nontrivial $ F_{mr}^{\; i} $ if $ \ell $ is a half-integer. 
Now suppose that $ \ell $ is an integer and write $m$ instead of $r.$ Then
\[
  F_{mn}^{\; i} = \delta_{m+n,0} \varphi_m^i, \quad \varphi_m^i = - \varphi_{-m}^i
\]
The equation (\ref{FWrel1}) reads
\[
 \delta_{m+n+k,0} \varphi_m^i = \sum_j \epsilon_{ij} ( n W_{m+n\;k}^{\ j} - (\ell m-k) W_{n\;m+k}^{\ j}). 
\]
Set $ n = m+k = 0, $ then we have
\[
  \varphi_m^i = -\sum_j \epsilon_{ij} (\ell+1) m W_{00}^{\; j},
\]
since $ W_{00}^j = 0. $ This shows that there exists no nontrivial $ F_{mn}^{\; i} $ for integer $ \ell. $ 

 Non-existence of nontrivial $ F_{mr}^{\; i} $ allows us to set $ F_{mr}^{\; i} = 0 $ on the left hand side of (\ref{FWrel1}). 
Thus we have
\[
  n W_{m+n\,r}^{\; i} - (\ell m-r) W_{n\,m+r}^{\; i} = 0.
\]
Setting $m=0$ this relation yields
\[
  (n+r) W_{nr}^{\; i} = 0, 
\]
which implies that $ W_{nr}^{\; i} = 0 $ if $ n+r \neq 0. $ 
It follows that $ W_{nr}^{\; i} = 0 $ for any half-integer $ \ell. $ 
Suppose that $ \ell $ is an integer, then 
\[
    W_{mn}^{\; i} = \delta_{m+n,0} w_m^i, \quad w_m^i = - w_{-m}^i.
\]
From (\ref{Wrel}) we have the relation $ w_m^i = w_n^i, $ namely, $ w_m^i $ is a constant equal to 
$ w_0^i = 0. $ 

This completes the proof of the theorem. 
\end{proof}

%
%
%
\setcounter{equation}{0}
\section{Verma modules over $ \calg$ for integer $ \ell$}
\label{sec:verma}

\subsection{Verma modules}

 In this section we study highest weight representations,  especially 
Verma modules, of $ \alg_{\ell} $ with the central extensions. 
The extended algebra by all the central elements given in Theorem \ref{theo:ext} is denoted by $ \tilde{\alg}_{\ell}. $ 
From now on we assume that $ \ell $ is a positive integer. 
To study the Verma modules over $ \calg $ for an integer $ \ell, $ 
we employ the procedure which is an extension of that for 
Schr\"odinger-Virasoro algebra used in~\cite{RoUn}. 
Define the degree of  $X_n \in \calg $ by $ \deg(X_n) = -n $ where $ X = L, J, P^i. $ 
This allows us to define the triangular type decomposition of $ \calg: $ 
  \bea
    \calg &=& \calg^{-} \oplus \calg^0 \oplus \calg^+ 
    \nn \\
    &=& 
    \langle\; L_{-n}, J_{-n}, P_{-n}^i \; \rangle \; \oplus \;
    \langle\; L_{0}, J_{0}, P_{0}^i \; \rangle \; \oplus \;
    \langle\; L_{n}, J_{n}, P_{n}^i \; \rangle,
    \qquad
    n \in {\mathbb Z}_+ \nn
  \eea
Let $ \ket{0} $ be the highest weight vector:
   \bea
    & &  L_n \ket{0} = J_n \ket{0} = P_n^i \ket{0} = 0, \quad n \in {\mathbb Z}_+ \nn \\
    & &  L_0 \ket{0} = h \ket{0}, \quad J_0 \ket{0} = \mu \ket{0}, \quad P_0^i \ket{0} = \rho_i \ket{0},
      \nn
   \eea
Following the usual definition of Verma modules (see e.g. \cite{Dix}), we define the Verma modules over $ \calg $ by 
\[
   V^{\cal I} = U(\calg^{-}) \ket{0},
\]
where $ {\cal I} = \{\; h, \mu, \rho_1, \rho_2, c_1, c_2, c_3 \; \}. $ 
The Verma module $ V^{\cal I} $ is a graded-modules through a natural 
extension of the degree from $ \calg $ to $ U(\calg) $ by 
$ \deg(XY) = \deg(X) + \deg(Y),$ $ \; X, Y \in U(\calg), $
\[
   V^{\cal I}
       = \bigoplus_{ n \in {\mathbb Z}_{\geq 0} } V^{\cal I}_n,
     \quad
     V^{\cal I}_n = \{ X \ket{0} \ | \ X \in U(\calg^-), \ \deg(X) = n \ \}.
\]

  One can introduce an inner product in $ V^{\cal I} $ by extending the anti-automorphism 
$ \omega $ of $ \alg_{\ell} $ defined in (\ref{antiauto}) to $ U(\calg). $ 
We define the inner product of $ X \ket{0}, Y \ket{0} \in V^{\cal I} $ by
\[
   \bra{0} \omega(X) Y \ket{0}, \qquad \bracket{0}{0} = 1. 
\]
We remark that the central charges $ c_k $ are real under $ \omega. $ 

  The basis of $ V^{\cal I}_n $ is specified by a partition of an integer.  
Let us first fix our notations and conventions. 
A partition $ A=(a_1a_2\cdots a_{\ell}) $ of a positive integer $n$ is the sequence of 
positive integers such that
\bea
  n &=& a_1 + a_2 + \cdots + a_{\ell}, \nn \\
    & & a_1 \geq a_2 \geq \cdots \geq a_{\ell} > 0.  \nn
\eea
The integers $n$ and $\ell$ are called degree and length of the partition $A,$ respectively. 
They are denoted by $ \deg A $ and $ \ell(A). $ 
For a given $n,$ the number of possible partitions is denoted by $ p(n). $ 
Let $ A = (a_1a_2 \cdots a_{\ell}), B = (b_1 b_2 \cdots b_m) $ be two partitions of $n.$ 
If the first nonvanishing $ a_i - b_i $ is positive, then we write $ A > B. $ 
This determines an ordering on the set of partitions of $n$. 

   To specify a vector in $ V^{\cal I}_n, $ we decompose $n$ into a sum of four 
non-negative integers:
\[
  n = a + b + c + d.   
\]
Let $ A, B, C $ and $ D $ be partitions of $ a, b, c $ and $ d, $ respectively. 
A vector in $ V^{\cal I}_n $ is labelled as follows:
\beq
  P_{-A}^1 P_{-B}^2 L_{-C} J_{-D} \ket{0},  \label{vecV}
\eeq
where $ P_{-A}^1 = P_{-a_1}^1 P_{-a_2}^1 \cdots P_{-a_{\ell(A)}}^1 $ and so on. 
If the decomposition of $n$ contains zero, then the corresponding partition is empty set $\phi$ and 
the corresponding generators do not appear in (\ref{vecV}). For instance, if $ a = 0, $ then 
(\ref{vecV}) becomes $ P_{-B}^2 L_{-C} J_{-D} \ket{0}. $   
For a given decomposition $(a,b,c,d) $ there are $ p(a) p(b) p(c) p(d) $ vectors 
of the form (\ref{vecV}). Here we set $ p(0) = 1 $ as usual. It follows that
\[
  \mbox{dim} V_n^{\cal I} = \sum_{(a,b,c,d)} p(a) p(b) p(c) p(d).
\]
The values of $ \mbox{dim}V^{\cal I}_n $  for some small $n$ are indecated below: 
\[
  \begin{array}{c|cccccc}
    n & \quad 0 & \quad 1 &\quad 2 &\quad 3 &\quad 4 & \quad 5\\ \hline
    \mbox{dim}V^{\cal I}_n\quad & \quad 1 &\quad 4 &\quad 14 &\quad 40 &\quad 105 & \quad 252
  \end{array}
\]
One sees that the dimension of $ V_n^{\cal I} $ increase very rapidly as a function of $n.$ 

  In the calculation of next subsection we use another notation more frequently, 
since it is more convenient to make $P^1, P^2 $ a pair and 
$ L, J$ another pair. Suppose that $n$ is decomposed into a pair of non-negative 
integers $ (a,b),$ \textit{i.e.,} $ n = a + b. $ 
We then choose  partitions $ A = (a_1 \cdots a_q) $ of $a $ and 
 $ B = (b_1 \cdots b_m) $ of $b.$ 
For a given quartet $ (a,b,A,B) $ we define a subspace $\Rep$ of $ V_n^{\cal I} $ 
such that  $ \displaystyle V_n^{\cal I} = \bigoplus_{(a,b,A,B)} \Rep. $  
The basis of $ \Rep $ is determined as follows: 
First we produce two partitions $ A_1, A_2 $ form the partition $A = (a_1 \cdots a_q)$. 
The partition $ A_1 $ is a sequence of $s \;(0 \leq s \leq q)$ integers selected from $ a_1, a_2, \cdots, a_q, $ 
and the remaining $ q -s $ integers defines the partition $ A_2. $ Namely,
\bea
  & & A_1=(a_{\sigma_1} a_{\sigma_2} \cdots a_{\sigma_{s}}), 
      \qquad 
      A_2 = (a_{\rho_1} a_{\rho_2} \cdots a_{\rho_{q-s}}),
   \nn \\
  & & \deg A_1 + \deg A_2 = \deg A = a, \quad \deg A_k \leq \deg A, \ (k=1,2)
     \label{subpart}
\eea
In a similar way we produce two partitions $ B_1, B_2$ from the partition $ B= (b_1 \cdots b_m): $
\bea
  & & B_1 = (b_{\lambda_1}  b_{\lambda_2}  \cdots  b_{\lambda_t}), 
    \quad 
      B_2 = (b_{\nu_1}  b_{\nu_2}  \cdots  b_{\nu_{m-t}}),
    \quad (0 \leq t \leq m) 
    \nn \\
   & & \deg B_1 + \deg B_2 = \deg B = b, \quad \deg B_k \leq \deg B, \ (k= 1,2)
   \label{subpart2}
\eea
We associate the vector $ P_{-A_1}^1 P_{-A_2}^2 L_{-B_1} J_{-B_2} \ket{0} \in \Rep $ with 
each quartet of partitions $ (A_1,A_2,B_1,B_2). $ Namely, the basis of $ \Rep $ is labelled by 
the partitions $ (A_1,A_2,B_1,B_2). $ Hence $ \dim \Rep $ is equal to the number of all 
possible partitions $ (A_1,A_2,B_1,B_2) $  for the given $ (a,b,A,B). $ 
For short we denote a vector in $ \Rep $ by
\[
 \ket{ (P^1P^2)_{-A} (LJ)_{-B}  } \quad\mbox{or}\quad  (P^1P^2)_{-A} (LJ)_{-B}  \ket{0}. 
\]

  For illustration  the  vectors belonging to $ \Rep $ for $ n= 1, 2 $ are listed below. 
The vectors and $ \Rep $ are written in \textit{horizontal order} (see  Definition \ref{def:horizon}).
\[
  {\cal V}^{1,0}_{(1) \phi}:\ P^1_{-1} \ket{0},\ P^2_{-1} \ket{0}, \qquad
  {\cal V}^{0,1}_{\phi (1)}:\  L_{-1} \ket{0},\ J_{-1} \ket{0}.
\]
\bea
   & & 
   {\cal V}^{2,0}_{(1^2) \phi}:\ (P^1_{-1})^2 \ket{0},\ P^1_{-1} P^2_{-1} \ket{0},\ (P^2_{-1})^2 \ket{0}, 
   \qquad
   {\cal V}^{2,0}_{(2) \phi}:\  P^1_{-2}\ket{0},\ P^2_{-2} \ket{0},
   \nn \\
   & & {\cal V}^{1,1}_{(1)(1)}:\  P^1_{-1} L_{-1} \ket{0},\  P^1_{-1} J_{-1} \ket{0},\ 
              P^2_{-1} L_{-1} \ket{0},\  P^2_{-1} J_{-1}\ket{0},
   \nn \\
   & &
   {\cal V}^{0,2}_{\phi (2)}:\ L_{-2} \ket{0},\ J_{-2} \ket{0}, 
   \qquad
   {\cal V}^{0,2}_{\phi (1^2)}:\ (L_{-1})^2 \ket{0},\  L_{-1} J_{-1} \ket{0},\  (J_{-1})^2 \ket{0}.
   \nn 
\eea

%
\subsection{Kac determinant formula}

  The reducibility of  $ V^{\cal I} $ may be investigated by the Kac determinant. 
The Kac determinant is defined as usual~\cite{KaRa}. 
Let $ \ket{i} (i = 1, \cdots \mbox{dim} V_n^{\cal I}) $ be a basis of $ V_n^{\cal I}, $ then 
the Kac determinant at level (degree) $n$ is given by
\[
  \Delta_n = \det (\; \bracket{i}{j} \;).
\]

  The essential idea for calculating $ \Delta_n $ for arbitrary $n$ is to define 
two different orderings for the basis of $ V_n^{\cal I}. $ By the orderings $ \Delta_n $ 
is equal (up to sign) to the determinant of a matrix of row echelon form. 
This will be achieved by  the following lemmas and definitions:
\begin{lemma}
\label{lemm:vanish}
  Let $ {\cal V}^{a,b}_{AB},\; {\cal V}^{c,d}_{CD} \subset V_n^{\cal I}. $ Then 
\beq
   \bracket{ (P^1P^2)_{-A} (LJ)_{-B}}{ (P^1P^2)_{-C} (LJ)_{-D} } = 0, \label{InProRel1}
\eeq
if one of the followings is true:
\begin{enumerate}
 \renewcommand{\labelenumi}{\roman{enumi})}
  \item $ a > d $ ( so that $b < c$)
  \item $ a = d $ ( so that $ b = c $) and  $ A < D $
  \item $ a = d $ ( so that $ b = c $)  and  $ B > C $
\end{enumerate}
\end{lemma}
\begin{proof}
 By the definition of the inner product and the commutativity of $ P_n^i, $ the inner product yields
\[
  \bracket{ (P^1P^2)_{-A} (LJ)_{-B}}{ (P^1P^2)_{-C} (LJ)_{-D} }
  = 
  \bracket{\omega(  (P^1P^2)_{-C} ) (LJ)_{-B} }{ \, \omega( (P^1P^2)_{-A} ) (LJ)_{-D} }
\]
We show that $ \ket{ \omega( (P^1P^2)_{-A} ) (LJ)_{-D} } = 0 $ if the condition i) or ii) is true. 

\noindent
i) $ a > d. $ 
We move $ (LJ)_{-D} $ to the left of $ \omega( (P^1P^2)_{-A} ): $
\bea
  & & \ket{ \omega( (P^1P^2)_{-A} ) (LJ)_{-D} }
  \nn \\
  & & \qquad 
   = (LJ)_{-D}\, \omega( (P^1P^2)_{-A} ) \ket{0} + \sum_{E,F,G} f_{E,F,G} (LJ)_{-E} (P^1P^2)_{-F} (P^1P^2)_{G} \ket{0},
  \label{L1Proof1}
\eea
where $ (P^1P^2)_{G} \in U(\calg^+) $ and $ f_{E,F,G} $ is a numerical coefficient. 
The associated identity for the degree of partition 
is 
\[
  a -d = -\deg E - \deg F + \deg G.
\]
The first term of (\ref{L1Proof1}) 
vanishes since $ \omega( (P^1P^2)_{-A} ) \in U(\calg^+). $ 
The condition $ a > d $ means that $ \deg G > 0,  $ i.e., $ (P^1P^2)_{G} $ always exists 
in the second term of (\ref{L1Proof1}). This factor  annihilates the highest weight vector $ \ket{0} $ 
so that the second term vanishes, too. Hence the right hand side of (\ref{L1Proof1}) is always zero.

\noindent
ii) $ a = d $ and $ A < D. $ 
Let $ D= (d_{\mu_1} d_{\mu_2}\cdots). $  Then one may write $ (LJ)_{-D} = L_{-d_{\mu_1}} (LJ)_{-D^{(1)}} $ where 
$ D^{(1)} = (d_{\mu_2} d_{\mu_3} \cdots) $ is a partition of $ d - d_{\mu_1}. $  
Because of the Abelian nature of $ \langle \; P^i_n \; \rangle $ one can write 
$ \omega( (P^1P^2)_{-A} ) = (P^1 P^2)_{A}. $ 
It follows that
\bea
 & & \ket{ \omega( (P^1P^2)_{-A} ) (LJ)_{-D} } = [\,(P^1P^2)_A, (LJ)_{-D} \,] \ket{0}
 \nn \\
 & & \qquad  
  = [\,(P^1P^2)_A, L_{-d_{\mu_1}}] (LJ)_{-D^{(1)}} \ket{0} 
  + 
  L_{-d_{\mu_1}}  [\,(P^1P^2)_A, (LJ)_{-D^{(1)}} ] \ket{0}
  \nn \\
 & & \qquad 
 = \sum_{A^{(1)},B^{(1)}} (P^1P^2)_{-B^{(1)}} (P^1P^2)_{A^{(1)}} (LJ)_{-D^{(1)}} \ket{0} 
 + L_{-d_{\mu_1}} \omega( (P^1P^2)_{-A} )   (LJ)_{-D^{(1)}} \ket{0},
 \nn \\
 & & 
  \label{L1Proof2}
\eea
where the numerical coefficients appearing in the summation part are omitted for the sake of 
simplicity. 
The second term of (\ref{L1Proof2}) has no contribution because of i) and $ a > d-d_{\mu_1}. $ 
In the first term (summation part) we have the relation
\[
  \deg A^{(1)} = \deg D^{(1)} + \deg B^{(1)}. 
\]
If $ \deg B^{(1)} > 0 $ then $ \deg A^{(1)} > \deg D^{(1)}. $  We use i) again and see that the terms 
with $ \deg B^{(1)} > 0 $ has no contribution to the summation. Thus only the terms with $ \deg B^{(1)} = 0 $ 
remains:
\[
  \ket{ \omega( (P^1P^2)_{-A} ) (LJ)_{-D} } = 
  \sum_{A^{(1)}} \ket{ (P^1P^2)_{A^{(1)}} (LJ)_{-D^{(1)}} },
\]
where 
\[
   \deg A^{(1)} = \deg D^{(1)} < \deg A, \qquad  A^{(1)} < D^{(1)}.
\]
The second relation is due to $ A < D. $ 

  One can repeat the same argument for $ \ket{ (P^1P^2)_{A^{(1)}} (LJ)_{-D^{(1)}} } $ again and again. 
At every step we have 
\[
  \ket{ \omega( (P^1P^2)_{-A} ) (LJ)_{-D} } = 
  \sum_{A^{(k)}} \ket{ (P^1P^2)_{A^{(k)}} (LJ)_{-D^{(k)}} },
\]
where 
\[
   \deg A^{(k)} = \deg D^{(k)} < \deg A^{(k-1)}, \qquad  A^{(k)} < D^{(k)}.
\]
However one can not repeat this until $ \deg A^{(k)} = 0 $ since this is contradict with $ A^{(k)} < D^{(k)}. $
This means that at certain step $ \deg B^{(k)} \neq 0 $ for all terms in the summation.  
Thus $ \ket{ \omega( (P^1P^2)_{-A} ) (LJ)_{-D} } = 0. $ 

  We have shown that (\ref{InProRel1}) is true under the condition i) or ii). 
If the condition iii) is true, then one can show that $ \ket{\omega(  (P^1P^2)_{-C} ) (LJ)_{-B} } = 0 $ 
by the same method as the cases i) and ii). 
\end{proof}

  Now we introduce two different orderings of the basis of $ V_n^{\cal I}. $ 
Essentially, they are  the orderings of the set of subspaces $ {\cal V}^{a,b}_{AB} $ 
and the ordering of vectors in each subspace is 
not essential for the calculation of $\Delta_n. $ 
\begin{definition}
\label{def:horizon}
  By the horizontal ordering we mean the following arrangement of the vectors in $ V_n^{\cal I}. $
  \begin{enumerate}
     \item we put $ {\cal V}^{a,b}_{AB} $ from left to right in decreasing order of $a $
     \item $ {\cal V}^{a,b}_{AB} $ having the same value of $a$ are rearranged  in 
     increasing order of the partition $A$
     \item $ {\cal V}^{a,b}_{AB} $ having the same value of $a$ and the same partition $A$ are 
     rearranged in decreasing order of partition $ B $
     \item vectors in each $ {\cal V}^{a,b}_{AB} $ are arranged in lexicographic 
     order with respect to $ P^1 < P^2 <  L <  J. $ The same type of generators are arranged in 
     increasing order of their indices
  \end{enumerate}
\end{definition}
The horizontal ordering for $ n = 1, 2 $ is given at the end of previous subsection. 
Further example for $ n = 3 $ is given below in terms of the subspace $ {\cal V}^{a,b}_{AB}. $
\bea
 & & {\cal V}^{3,0}_{ (1^3) \phi}, \quad {\cal V}^{3,0}_{(21) \phi}, \quad {\cal V}^{3,0}_{(3)\phi}, 
 \quad 
 {\cal V}^{2,1}_{ (1^2)(1)}, \quad {\cal V}^{2,1}_{ (2)(1)},
 \nn \\
 & & {\cal V}^{1,2}_{ (1) (2)}, \quad {\cal V}^{1,2}_{ (1) (1^2)}, \quad {\cal V}^{0,3}_{ \phi (3)}, \quad 
 {\cal V}^{0,3}_{ \phi (21)}, \quad {\cal V}^{0,3}_{ \phi (1^3)}. 
 \nn
\eea
The vectors in $ {\cal V}^{3,0}_{  (1^3) \phi } $ are arranged as
\[
  (P^1_{-1})^3 \ket{0},\quad (P^1_{-1})^2 P^2_{-1} \ket{0}, \quad 
  P^1_{-1} (P^2_{-1})^2 \ket{0}, \quad (P^2_{-1})^3 \ket{0},
\]
and the vectors in $ {\cal V}^{1,2}_{(1)(2)} $ as
\[
   P^1_{-1} L_{-2} \ket{0}, \quad P^1_{-1}  J_{-2} \ket{0}, \quad
   P^2_{-1} L_{-2} \ket{0}, \quad P^2_{-1} J_{-2} \ket{0}. 
\]
\begin{definition}
\label{def:verti}
Arrange the basis vector of $ V_n^{\cal I} $ in horizontal order. 
We replace each vector $ P^1_{-A} P^2_{-B} L_{-C} J_{-D} \ket{0} $ in the horizontal ordering with 
$ P^1_{-D} P^2_{-C} L_{-B} J_{-A} \ket{0}. $  This procedure gives a new arrangement of the basis 
of $ V_n^{\cal I} $ and we refer to this arrangement as the vertical ordering. 
\end{definition}
By definition the subspace $ {\cal V}^{a,b}_{AB} $  in the horizontal ordering is 
replaced with $ {\cal V}^{b,a}_{BA} $ in the vertical ordering. 

  Now let $ \ket{H_j} \ (i=1, \cdots, \dim V_n^{\cal I} ) $ be the basis of $ V_n^{\cal I} $ in horizontal ordering and 
 $ \ket{V_j} $ be the same basis in vertical ordering. We consider the matrix 
 $ M_n = (\; \bracket{V_i}{H_j} \; ). $ It is obvious that 
\[
    \Delta_n = \det M_n \quad (\mbox{up to sign}).
\] 
It is clear from Definition~\ref{def:horizon}, \ref{def:verti} and Lemma~\ref{lemm:vanish} that $ M_n $ is a 
matrix of row echelon form and its block diagonal parts is the matrices whose entries 
are product of vectors between $ {\cal V}^{a,b}_{AB} $ and $ {\cal V}^{b,a}_{BA}. $ 
We denote the matrices sitting in the block diagonal of $M_n$ by 
\[
  {\cal M}(abAB) = (\; \bracket{ (P^1P^2)_{-A} (LJ)_{-B} }{ (P^1P^2)_{-B} (LJ)_{-A} } \; ).
\]
As an example we give $n=2$ matrix:
{\renewcommand{\arraystretch}{2.2}
\[
  M_2 = 
  \begin{array}{c}
     \\ 0\phi(1^2)  \\ 0\phi(2)  \\ 1(1)(1) \\ 2(2)\phi \\ 2(1^2)\phi
  \end{array}
  \; 
  \begin{array}{ccccc}
   2(1^2)\phi & 2(2)\phi & 1(1)(1) & 0\phi(2) & 0\phi(1^2) \\
  \multicolumn{5}{c}{
  \left(
    \begin{array}{ccccc}
      \multicolumn{1}{c|}{\cellcolor[gray]{.8} {\  {\cal M}\ \,}} & & & & \\
      \cline{1-2}
      & \multicolumn{1}{|c|}{\cellcolor[gray]{.8} {\ {\cal M}\ }} & & \mbox{\Huge{$\ast$}} & \\
      \cline{2-3}
      & & \multicolumn{1}{|c|}{\cellcolor[gray]{.8} {\ {\cal M}\ }} & & \\
      \cline{3-4}
      & \mbox{\Huge{$0$}} & & \multicolumn{1}{|c|}{\cellcolor[gray]{.8} {\ \,{\cal M}\ }} & \\
      \cline{4-5} 
      & & & & \multicolumn{1}{|c}{\cellcolor[gray]{.8} {\ \,{\cal M}\ }} 
    \end{array}
  \right)
  }
  \end{array},
\]
}
where the rows and columns are labelled by the triple $(a, A, B). $ 
 
  Summarizing the results so far, $ \Delta_n $ equals, up to sign, 
to the product of determinant of the matrices sitting in diagonal 
parts of $ M_n:$ 
\[
  \Delta_n = \prod_{a,b} \prod_{A,B} \det {\cal M}(abAB),
\]  
where the pair $ (a,b) $ runs all possible decomposition of $n$ into two non-negative integers and 
for a fixed $ (a,b) $ the pair $ (A,B) $ runs  all possible partitions of $ a $ and $b.$ 
Namely, the calculation of $ \Delta_n $ has been reduced to the calculation of $ \det {\cal M}(abAB). $  
The computation of $ \det {\cal M}(abAB) $  is further simplified by the next lemma:
\begin{lemma}
\label{lemm:prod}
\bea
   & & \bracket{ (P^1P^2)_{-A} (LJ)_{-B} }{ (P^1P^2)_{-B} (LJ)_{-A} }
   \nn \\
   & & \hspace{2cm}
   =
   \bracket{ (P^1P^2)_{-A} }{ (LJ)_{-A} } 
   \bracket{ (LJ)_{-B} }{ (P^1P^2)_{-B} }.
   \label{InProRel2}
\eea
\end{lemma}
\begin{proof}
  Proof is similar to Lemma~\ref{lemm:vanish}. The LHS of (\ref{InProRel2}) yields
  \[
     \text{LHS} =  \bracket{ \omega( (P^1P^2)_{-B} ) (LJ)_{-B} }{ \, \omega( (P^1P^2)_{-A} ) (LJ)_{-A} }.
  \]
  We repeat the same procedure as the proof of Lemma~\ref{lemm:vanish} ii). 
  Let $ (LJ)_{-A} = L_{-d_{\mu_1}} (LJ)_{-A^{(1)}} $ then one may show that
  \[
    \ket{ \omega( (P^1P^2)_{-A} ) (LJ)_{-A} } = \sum_{\bar{A}^{(1)}} f_1(\bar{A}^{(1)})
    \ket{ (P^1P^2)_{\bar{A}^{(1)}} (LJ)_{-A^{(1)}} },
  \]
  where $ \bar{A}^{(1)} $ is a partition of $ a - d_{\mu_1} $ and $ f_1(\bar{A}^{(1)}) $ is a numerical coefficient. 
  We have the relation for the partitions
  \[
     \deg \bar{A}^{(1)} = \deg A^{(1)} < \deg A, \qquad \bar{A}^{(1)} \leq A^{(1)}.
  \]
  One may repeat this again and again, then come to the equation
  \[
    \ket{ \omega( (P^1P^2)_{-A} ) (LJ)_{-A} } = \sum_{\bar{A}^{(k)}} f_k(\bar{A}^{(k)})
    \ket{ (P^1P^2)_{\bar{A}^{(k)}} (LJ)_{-A^{(k)}} },
  \]
  where
  \[
     \deg \bar{A}^{(k)} = \deg A^{(k)} < \deg A^{(k-1)}, \qquad \bar{A}^{(k)} \leq A^{(k)}.  
  \]
  Since there exits the partition $ \bar{A}^{(k)} $ equal to $ A^{(k)} $ at any step, one may repeat this 
  until $ \deg A^{(N)} = 0. $ Then we have
  \[
     \ket{ \omega( (P^1P^2)_{-A} ) (LJ)_{-A} } = f_N(A^{(N)}) \ket{0} = \bracket{ (P^1P^2)_{-A} }{ (LJ)_{-A} } \ket{0}.     
  \]
  Similarly one may prove
  \[
    \bra{ \omega( (P^1P^2)_{-B} ) (LJ)_{-B} } = \bracket{ (LJ)_{-B} }{ (P^1P^2)_{-B} }  \bra{0}.
  \]
  This proves (\ref{InProRel2}). 
\end{proof}

It follows from Lemma~\ref{lemm:prod} that $ {\cal M}(abAB) $ is a direct product of 
two matrices:
\[
  {\cal M}(abAB) = {\cal M}(A) \otimes \tilde{\cal M}(B), 
\]
where 
\[
  {\cal M}(A) = ( \;    \bracket{ (P^1P^2)_{-A} }{ (LJ)_{-A} }  \; ), 
  \quad
  \tilde{\cal M}(B) = ( \; \bracket{ (LJ)_{-B} }{ (P^1P^2)_{-B} } \; ).
\]
Let $ s(A) $ be the size of matrix $ {\cal M}(A) $ which equals to the size of $ \tilde{\cal M}(A) $ and equals 
to the number of vectors denoted by $ \ket{ (LJ)_{-A} }$  for fixed  $ a $ and $A$.  Then
\[
  \det {\cal M}(abAB) = ( \det {\cal M}(A) )^{s(B)} (\det \tilde{\cal M}(B) )^{s(A)}.
\]
We note that if $ A = \phi $ then $ \ket{(LJ)_A} = \ket{0} $ so that $ s(\phi) = 1. $ 
In this way, calculation of $ \Delta_n $ is finally reduced to calculation of 
$ \det {\cal M}(A)  $ and $  \det \tilde{\cal M}(A). $  
\begin{lemma}
\label{lemm:detM}
\[
  \det {\cal M}(A) = \det \tilde{\cal M}(A) = \lambda(a,A,\ell) (\rho_1^2 + \rho_2^2)^{\half s(A) \ell(A)},
\]
where the equality is up to sign. 
The overall factor $ \lambda(a,A,\ell) $ depends only on $ a, $ its partition $A$ and the spin parameter $ \ell. $ 
\end{lemma}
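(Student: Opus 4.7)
My strategy has three ingredients: a structural restriction on the entries of $\mathcal{M}(A)$, an $SO(2)$-invariance argument coming from a rotation automorphism of $\tilde{\alg}$, and a symmetry observation identifying $\tilde{\mathcal{M}}(A)$ with $\mathcal{M}(A)^{T}$.

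First I would show that every entry of $\mathcal{M}(A)$ is a homogeneous polynomial in $\rho_1, \rho_2$ alone, of total degree $\ell(A)$. Moving the positive-mode $L$'s and $J$'s (from the bra) to the right through the negative-mode $P$'s (from the ket), the commutation relations $[L_m, P_n^i]=(m-n)P_{m+n}^i$ and $[J_m, P_n^i]=\sum_j \epsilon_{ij} P_{m+n}^j$ produce only $P$-type terms, while $[L,L]$ and $[L,J]$ brackets stay at strictly positive modes and are annihilated by $\ket{0}$. Hence every surviving contribution reduces to a product of $\ell(A)$ zero-mode operators $P_0^i$ acting on $\ket{0}$, each contributing a factor $\rho_i$. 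Consequently $\det\mathcal{M}(A)$ is a homogeneous polynomial of degree $s(A)\ell(A)$ in $\rho_1, \rho_2$ alone, independent of $h$ and $\mu$.

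The core of the argument is the $SO(2)$ symmetry. Consider the family of Lie algebra automorphisms $\phi_\theta$ of $\tilde{\alg}$ that fix $L_m, J_m$ and rotate $(P_m^1, P_m^2)$ by the angle $\theta$. Each $\phi_\theta$ commutes with $\omega$ and gives an isomorphism of Verma modules relating parameters $(\rho_1,\rho_2)$ and $R_\theta(\rho_1,\rho_2)$ that preserves the bilinear form, fixes every bra vector $(LJ)_{-A}\ket{0}$ pointwise, and mixes the ket vectors $(P^1 P^2)_{-A}\ket{0}$ via an operator $T(\theta)$ that is a representation of $SO(2)$ on the column space. Therefore $\mathcal{M}(A)$ evaluated at the rotated parameters equals $\mathcal{M}(A)(\rho)\,T(\theta)^{-1}$. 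Since $\det\circ T\colon SO(2)\to\mathbb{R}^\times$ is a continuous homomorphism from a compact connected group, its image is a compact connected subgroup of $\mathbb{R}^\times$, hence trivial, so $\det T(\theta)\equiv 1$. Consequently $\det\mathcal{M}(A)$ is $SO(2)$-invariant, and by classical invariant theory every $SO(2)$-invariant polynomial in two real variables lies in $\mathbb{R}[\rho_1^2+\rho_2^2]$; comparing degrees forces
\[
  \det\mathcal{M}(A)=\lambda(a,A)(\rho_1^2+\rho_2^2)^{s(A)\ell(A)/2},
\]
with $s(A)\ell(A)$ necessarily even as a byproduct. For the equality with $\det\tilde{\mathcal{M}}(A)$, I would use that $\omega$ is involutive and fixes $L_0, J_0, P_0^i$ (which act as scalars on $\ket{0}$), so $\bra{0}Z\ket{0}=\bra{0}\omega(Z)\ket{0}$ for every $Z\in U(\tilde{\alg})$; together with $\omega(XY)=\omega(Y)\omega(X)$ this makes the bilinear form symmetric, whence $\tilde{\mathcal{M}}(A)=\mathcal{M}(A)^{T}$ and the determinants coincide.

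The main obstacle I anticipate is establishing cleanly that $T(\theta)$ really factors through a well-defined $SO(2)$-representation on the column space indexed by the (unordered) subpartitions $(B_1, B_2)$ of $A$, particularly when $A$ has repeated parts and several orderings must be identified. Once that representation-theoretic point is settled, the compactness of $SO(2)$ delivers $\det T(\theta)\equiv 1$ automatically, and the rest is a direct application of invariant theory and the degree count.
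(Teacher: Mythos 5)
Your route is genuinely different from the paper's and most of it is sound. The paper proves the lemma by induction on $\ell(A)$: the base case is the explicit $2\times 2$ matrix with $\det\mathcal{M}((a))=2a(\rho_1^2+\rho_2^2)$, and the inductive step adjoins a new part $\alpha$, treating separately $\alpha>a_1$ (where $\mathcal{M}(A')=\mathcal{M}((\alpha))\otimes\mathcal{M}(A)$) and the repeated-part case (where an explicit row reduction produces the factor $\Lambda$). Your argument instead pins down the \emph{form} of the determinant abstractly: the homogeneity claim (each entry is a degree-$\ell(A)$ polynomial in $\rho_1,\rho_2$ with coefficients independent of $h,\mu$) is correct, the rotation $\phi_\theta$ is indeed an automorphism of $\calg$ commuting with $\omega$ (since $R\in SO(2)$ commutes with $\epsilon$), and $\det T(\theta)\equiv 1$ follows as you say. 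The well-definedness issue you worry about is not actually a problem: for repeated parts $\alpha^{\lambda}$ the relevant column vectors $(P^1_{-\alpha})^{\lambda-n}(P^2_{-\alpha})^n\ket{(P^1P^2)_{-B}}$ are linearly independent by PBW, and the column space is the tensor product over distinct parts of the symmetric powers $\mathrm{Sym}^{\lambda_i}(\mathbb{R}^2)$, a genuine $SO(2)$-module. Your symmetry argument $\tilde{\mathcal{M}}(A)=\mathcal{M}(A)^{T}$ via $\pi(\omega(Z))=\pi(Z)$ is also correct and is cleaner than what the paper offers for that equality.

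The genuine gap is that your argument cannot show $\lambda(a,A)\neq 0$. An $SO(2)$-invariant homogeneous polynomial of degree $s(A)\ell(A)$ is either identically zero or a nonzero multiple of $(\rho_1^2+\rho_2^2)^{\half s(A)\ell(A)}$; invariant theory alone does not exclude the first alternative, and your remark that evenness of $s(A)\ell(A)$ comes out ``as a byproduct'' silently presupposes nonvanishing. This matters: the entire point of Lemma~\ref{lemm:detM} in the paper is to feed Theorem~\ref{theo:KacDet}, whose constants $c_n$ are nonzero ($c_1=2$, $c_2=2^{18}$, $c_3=2^{72}3^6$) and whose consequence is that $\Delta_n\neq 0$ for $\rho_1^2+\rho_2^2\neq 0$, hence irreducibility of the Verma modules. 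The paper's induction delivers exactly this missing piece through explicit nonzero prefactors such as $2a$ in the base case and $(2a)^{(\lambda+1)s(B)}$ and $(\lambda+1)!$ in the inductive step. To close your argument you would need at least one evaluation point (or one monomial coefficient, e.g.\ a triangularity argument at $\rho_2=0$) showing $\det\mathcal{M}(A)\not\equiv 0$ — and that computation is essentially the content of the paper's proof.
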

\begin{proof}
  We prove by induction with respect to $ \ell(A). $
  
\noindent
i) If $ \ell(A) = 1, \; i.e., \; A = (a), $ 
then $ s(A) = 2 $ and the possible partitions $ A_1, A_2 $ are 
$ (A_1, A_2) = ((a)\phi),\, (\phi(a)). $ 
The matrix  $ {\cal M}(A) $ and  $ \det{\cal M}(A) $ are calculated as follows
\bea
 & & 
  {\cal M}(A) = 
  \begin{pmatrix}
     \bracket{P^1_{-a}}{L_{-a}} & \bracket{P^1_{-a}}{J_{-a}} \cr
     \bracket{P^2_{-a}}{L_{-a}} & \bracket{P^2_{-a}}{J_{-a}}
  \end{pmatrix}
  = 
  \begin{pmatrix}
    (\ell+1)a \rho_1 & \rho_2 \cr
    (\ell+1)a \rho_2 & -\rho_1
  \end{pmatrix},
 \nn \\
 & & 
   \det{\cal M}(A) = -(\ell+1)a (\rho_1^2 + \rho_2^2).
  \label{l=1}
\eea
Hence the lemma is true for this case. 

\noindent
ii) Suppose that the lemma is true for any partition $ A = (a_1 a_2 \cdots a_{\ell(A)} ) $ of length $ \ell(A). $ 
Consider a partition $ A' $ which is obtained by adding one more positive integer $ \alpha $ to the partition $A: $ 
\[
  A' = (a_1 a_2 \cdots \alpha \cdots a_{\ell(A)}), \qquad \ell(A') = \ell(A) + 1.
\]
This partition $ A' $ is also obtained by adding  $ a_1 $ to the partition $ B = (a_2 \cdots \alpha \cdots a_{\ell(A)}). $ 
By the assumption of the induction, the lemma is true for the partition $ B. $ 
Thus it is enough to consider the $ A' $ of the form $ A' = (\alpha a_1 a_2 \cdots a_{\ell(A)} ) $ with 
$ \alpha \geq a_1. $ 

\noindent
iii) Suppose that $ \alpha > a_1. $ 
Let us recall the partitions defined in (\ref{subpart}). 
There exist two pairs of $(A_1',A_2') $ associated 
with one given pair $ (A_1, A_2),$ since $ A_1'$ or $ A_2'$ must contain $ \alpha. $ 
It follows that $ s(A') = 2 s(A). $ 
If $ A_1' $ contains $ \alpha,$ then we have
\[
  \ket{ (P^1P^2)_{-A'} }= P^1_{-\alpha} \ket{ (P^1P^2)_{-A} },
  \qquad 
  \ket{(LJ)_{-A'} } = L_{-\alpha} \ket{(LJ)_{-A}}.
\]
If $ A_2' $ contains $ \alpha, $ then we move $ J_{-\alpha} $ and $ P^2_{-\alpha} $ to the 
left most position:
\bea
  \ket{ (P^1P^2)_{-A'} }= P^2_{-\alpha} \ket{ (P^1P^2)_{-A} }, 
 & &
  \ket{(LJ)_{-A'} } = J_{-\alpha} \ket{(LJ)_{-A}} + \sum_{B} \ket{(LJ)_{-B}},
  \nn
\eea
where the summation runs over some partitions $ B $ satisfying
\beq
  \deg B = \deg A' = a + \alpha, \qquad B > A'. \label{LMcond}
\eeq
By Lemma~\ref{lemm:vanish}, the summation parts of $  \ket{(LJ)_{-A'} } $ do not contribute 
to the matrix elements of $ {\cal M}(A'). $ In short hand notation the matrix $ {\cal M}(A') $ 
may be written  as follows:
\beq
 {\cal M}(A') = 
 \begin{pmatrix}
      \bracket{ P^1_{-\alpha}(P^1P^2)_{-A} }{ L_{-\alpha} (LJ)_{-A}   } 
      &  
      \bracket{ P^1_{-\alpha}(P^1P^2)_{-A} }{ J_{-\alpha} (LJ)_{-A}   }
      \\
      \bracket{ P^2_{-\alpha}(P^1P^2)_{-A} }{ L_{-\alpha} (LJ)_{-A}   } 
      &  
      \bracket{ P^2_{-\alpha}(P^1P^2)_{-A} }{ J_{-\alpha} (LJ)_{-A}   }      
 \end{pmatrix}.
 \label{SymbolicFormofM}
\eeq
The matrix entries are calculated in the following way:
\bea
  & & \bracket{ P^k_{-\alpha}(P^1P^2)_{-A} }{ L_{-\alpha} (LJ)_{-A}   }
  = \bracket{ L_{\alpha} P^k_{-\alpha}(P^1P^2)_{-A} }{  (LJ)_{-A}   }
  \nn \\
  & & \quad 
  = (\ell+1) \alpha \rho_k \bracket{ (P^1P^2)_{-A} }{  (LJ)_{-A}   }
  + \bracket{ P^k_{-\alpha} L_{\alpha} (P^1P^2)_{-A} }{  (LJ)_{-A}   },
  \label{L3:Proof1}
\eea
where $ k = 1, 2 $ and the second term in the last equation vanishes because $ \alpha > a_j $ 
for all $ j= 1, 2, \dots, \ell(A).$  Similarly one has
\beq
 \bracket{ P^k_{-\alpha}(P^1P^2)_{-A} }{ J_{-\alpha} (LJ)_{-A}   }
 = 
 \sum_j \epsilon_{kj} \rho_j \bracket{ (P^1P^2)_{-A} }{  (LJ)_{-A}   }.
 \label{L3:Proof2}
\eeq
It follows that
\bea
 & & 
   {\cal M}(A') = 
     \begin{pmatrix}
        (\ell+1) \alpha \rho_1  \bracket{ (P^1P^2)_{-A} }{  (LJ)_{-A}   }
        &
        \rho_2 \bracket{ (P^1P^2)_{-A} }{  (LJ)_{-A}   }
        \\
        (\ell+1) \alpha \rho_2 \bracket{ (P^1P^2)_{-A} }{  (LJ)_{-A}   }
        &
        -\rho_1 \bracket{ (P^1P^2)_{-A} }{  (LJ)_{-A}   }
     \end{pmatrix}
  \nn \\
  & & 
   =
   {\cal M}((\alpha)) \otimes {\cal M}(A),  \nn
\eea
where $ {\cal M}((\alpha)) $ is the $ \ell(A) = 1 $ matrix given in  (\ref{l=1}). 
It follows that
\bea
 & & 
  \det {\cal M}(A') = [\,\det {\cal M}((\alpha))\,]^{s(A)} (\det {\cal M}(A))^2  
  \nn \\
 & & \qquad 
 \sim (\rho_1^2 + \rho_2^2)^{s(a) (\ell(A)+1)} 
  = (\rho_1^2 + \rho_2^2)^{\half s(A') \ell(A')},
\eea
where the overall factor $ \lambda(a,A,\ell) $ is omitted. 
Hence the lemma is true for this case.

\noindent
iv) Suppose that $ \alpha = a_1 = a_2 = \cdots = a_m > a_{m+1}, $ that is, 
$ A = (\alpha^m a_{m+1} \cdots a_{\ell(A)}) $ and $ A'  = (\alpha^{m+1} a_{m+1} \cdots a_{\ell(A)}). $
We repeat the same computation as iii) and find no difference up to the equation (\ref{SymbolicFormofM}). 
A difference appears in the last equation of (\ref{L3:Proof1}). 
The second term of the last equation in (\ref{L3:Proof1}) does not vanish in this case. 
To calculate the contribution from  the second term, we consider the partitions $ A_1, A_2 $ for 
$ (P^1 P^2)_{-A} $ of the following form:
\beq
  A_1 = (\alpha^{m_1} a_{\sigma_1} a_{\sigma_2} \cdots), \quad 
  A_2 = (\alpha^{m_2} a_{\mu_1} a_{\mu_2} \cdots), \quad 
  m_1 + m_2 = m. 
  \label{L3:Proof3}
\eeq
Then it is not difficult to see that the equation (\ref{L3:Proof1}) yields
\bea
 & & 
  \bracket{ P^1_{-\alpha}(P^1P^2)_{-A} }{ L_{-\alpha} (LJ)_{-A}   }
    = (\ell+1) \alpha  \rho_1 (m_1+1) \bracket{ (P^1 P^2)_{-A} }{ (LJ)_{-A} } 
  \nn \\
 & & \qquad \qquad \qquad 
  + (\ell+1) \alpha\rho_2 m_2 
     \bracket{ P^1_{-(\alpha^{m_1+1} a_{\sigma_1} \cdots)} P^2_{-(\alpha^{m_2-1} a_{\mu_1}\cdots)} }{ (LJ)_{-A} },
  \label{L3:Proof4} \\
 & & 
 \bracket{ P^2_{-\alpha}(P^1P^2)_{-A} }{ L_{-\alpha} (LJ)_{-A}   }
    = (\ell+1) \alpha  \rho_2 (m_2+1) \bracket{ (P^1 P^2)_{-A} }{ (LJ)_{-A} } 
 \nn\\
 & & \qquad \qquad \qquad 
  + (\ell+1) \alpha\rho_1 m_1 
     \bracket{ P^1_{-(\alpha^{m_1-1} a_{\sigma_1} \cdots)} P^2_{-(\alpha^{m_2+1} a_{\mu_1}\cdots)} }{ (LJ)_{-A} },
  \label{L3:Proof5}
\eea
and similarly we have the following instead of (\ref{L3:Proof2})
\bea
 & & 
  \bracket{ P^1_{-\alpha}(P^1P^2)_{-A} }{ J_{-\alpha} (LJ)_{-A}   }
    =   \rho_2 (m_1+1) \bracket{ (P^1 P^2)_{-A} }{ (LJ)_{-A} } 
  \nn \\
 & & \qquad \qquad \qquad 
  -\rho_1 m_2 
     \bracket{ P^1_{-(\alpha^{m_1+1} a_{\sigma_1} \cdots)} P^2_{-(\alpha^{m_2-1} a_{\mu_1}\cdots)} }{ (LJ)_{-A} },
  \label{L3:Proof6} \\
 & & 
 \bracket{ P^2_{-\alpha}(P^1P^2)_{-A} }{ J_{-\alpha} (LJ)_{-A}   }
    = - \rho_1 (m_2+1) \bracket{ (P^1 P^2)_{-A} }{ (LJ)_{-A} } 
 \nn\\
 & & \qquad \qquad \qquad 
  + \rho_2 m_1 
     \bracket{ P^1_{-(\alpha^{m_1-1} a_{\sigma_1} \cdots)} P^2_{-(\alpha^{m_2+1} a_{\mu_1}\cdots)} }{ (LJ)_{-A} },
  \label{L3:Proof7}
\eea
if the partitions $ A_1, A_2 $ for $ (P^1P^2)_{-A} $ are given by (\ref{L3:Proof3}). 
The equations (\ref{L3:Proof4}) and (\ref{L3:Proof6}) are in the same row of the matrix $ {\cal M}(A') $ 
and so are the equations (\ref{L3:Proof5}) and (\ref{L3:Proof7}).  
One may see that the second terms of (\ref{L3:Proof4}) - (\ref{L3:Proof7}) do not contribute to 
$ \det{\cal M}(A'), $  since they are linear combinations of the other rows. 
Hence we calculate $ \det{\cal M}(A') $ in the following way:
\bea
  & & \det{\cal M}(A') 
  \nn \\
  & & \quad = \det
      \begin{pmatrix}
         (\ell+1) \alpha  \rho_1 (m_1+1) \bracket{ (P^1 P^2)_{-A} }{ (LJ)_{-A} }
         &
         \rho_2 (m_1+1) \bracket{ (P^1 P^2)_{-A} }{ (LJ)_{-A} }
         \\
         (\ell+1) \alpha  \rho_2 (m_2+1) \bracket{ (P^1 P^2)_{-A} }{ (LJ)_{-A} }
         &
         -\rho_1 (m_2+1) \bracket{ (P^1 P^2)_{-A} }{ (LJ)_{-A} }
      \end{pmatrix}
  \nn \\
  & & \quad =
    [\det {\cal M}((\alpha)) ]^{s(A)}\, \det( (m_1+1)\bracket{ (P^1 P^2)_{-A} }{ (LJ)_{-A} } ) 
  \nn \\
  & & 
  \hspace{5cm} \times \;
    \det( (m_2+1)\bracket{ (P^1 P^2)_{-A} }{ (LJ)_{-A} } )
  \nn \\
  & & \quad \sim  [\,\det {\cal M}((\alpha))\,]^{s(A)} (\det {\cal M}(A))^2  
  \sim (\rho_1^2 + \rho_2^2)^{\half s(A') \ell(A')},
  \nn 
\eea
where the numerical constants are omitted. Thus the lemma is true for this case, too. 
\end{proof}
 Now we are able to write down the explicit formula of $\Delta_n. $ 
\begin{theorem}
\label{theo:KacDet}
 Level $n$ Kac determinant is given by
\[
  \Delta_n = c_n(\ell) \prod_{a,b} \prod_{A,B} (\rho_1^2 + \rho_2^2)^{\half s(A) s(B)( \ell(A) + \ell(B))}
\]
where the pair $ (a,b) $ runs all possible decomposition of $n$ into two non-negative integers and 
the pair $ (A,B) $ runs all possible partitions of fixed $ a $ and $b.$   
The coefficient $ c_n(\ell) $ is a numerical constant depending only on $ \ell. $  
\end{theorem}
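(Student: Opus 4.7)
My plan is to simply collate the three lemmas that have been established. The work has essentially been reduced to bookkeeping of exponents, and the main task is to verify that the exponents combine correctly to give the stated formula.

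First I would recall that, by Lemma~\ref{lemm:vanish} together with the horizontal and vertical orderings (Definitions \ref{def:horizon} and \ref{def:verti}), the Gram matrix $M_n = (\bracket{V_i}{H_j})$ is block upper-triangular, with diagonal blocks indexed by the pairs $(a,b)$ decomposing $n$ and partitions $(A,B)$ of $(a,b)$. The diagonal block indexed by $(a,b,A,B)$ is precisely the matrix $\mathcal{M}(abAB)$ whose entries are inner products $\bracket{(LJ)_{-A}(P^1P^2)_{-B}}{(LJ)_{-B}(P^1P^2)_{-A}}$. Hence, up to sign,
\[
   \Delta_n = \prod_{a+b=n}\ \prod_{A,B} \det \mathcal{M}(abAB).
\]

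Next, Lemma~\ref{lemm:prod} factorizes each inner product as a product of an $A$-piece and a $B$-piece, which at the level of matrices means
\[
   \mathcal{M}(abAB) = \mathcal{M}(A) \otimes \tilde{\mathcal{M}}(B).
\]
Since $\mathcal{M}(A)$ has size $s(A)$ and $\tilde{\mathcal{M}}(B)$ has size $s(B)$, the standard identity for determinants of Kronecker products gives
\[
   \det \mathcal{M}(abAB) = (\det \mathcal{M}(A))^{s(B)} \, (\det \tilde{\mathcal{M}}(B))^{s(A)}.
\]

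Finally, Lemma~\ref{lemm:detM} tells me that both $\det \mathcal{M}(A)$ and $\det \tilde{\mathcal{M}}(B)$ are, up to a numerical factor $\lambda(\cdot,\cdot)$, equal to $(\rho_1^2+\rho_2^2)$ raised to $\frac{1}{2} s(A)\ell(A)$ and $\frac{1}{2} s(B)\ell(B)$ respectively. Substituting, the exponent of $(\rho_1^2+\rho_2^2)$ in $\det \mathcal{M}(abAB)$ becomes
\[
   s(B) \cdot \tfrac{1}{2} s(A) \ell(A) \ +\ s(A) \cdot \tfrac{1}{2} s(B) \ell(B)
   \ =\ \tfrac{1}{2}\, s(A) s(B)\, \bigl(\ell(A)+\ell(B)\bigr),
\]
which is exactly the exponent appearing in the theorem. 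Collecting the $\lambda$-factors into an overall numerical constant $c_n$ completes the proof.

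There is essentially no obstacle here, since all three preparatory lemmas do the real work; the only care needed is to keep track of which matrix has which size so as to apply the Kronecker-product determinant formula correctly, and to verify that the scalar prefactors $\lambda(a,A)$ from Lemma~\ref{lemm:detM} indeed depend only on data that has been absorbed into $c_n$ (in particular, they are independent of the weights $\rho_i$, $h$, $\mu$).
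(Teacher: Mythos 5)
Your proposal is correct and follows exactly the route the paper takes: the paper states Theorem~\ref{theo:KacDet} as an immediate consequence of the block-echelon decomposition from Lemma~\ref{lemm:vanish} and the two orderings, the tensor-product factorization $\det {\cal M}(abAB) = (\det {\cal M}(A))^{s(B)}(\det \tilde{\cal M}(B))^{s(A)}$ from Lemma~\ref{lemm:prod}, and the evaluation in Lemma~\ref{lemm:detM}, with the same exponent bookkeeping you carry out. Nothing further is needed.
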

Essentially, the level $n$ Kac determinant is of the form $ \Delta_n = c_n(\ell) (\rho_1^2+\rho_2^2)^{q(n)}. $ 
We give examples of $ q(n) $ for $ n = 1, 2, 3: $
\[
  q(1) = 2, \qquad q(2) = 12, \qquad q(3) = 48.
\]
\begin{remark}
$ \Delta_n $ is independent of the central charges $ c_1, c_2 $  and $ c_3.$ 
Thus the formula of $ \Delta_n $ is common for the algebras $ \alg_{\ell} $ and $ \calg. $ 
This is also the case of 
the Schr\"odinger-Virasoro algebra in $(1+1)$ dimensional spacetime~\cite{RoUn}. 
\end{remark}
\begin{proposition}
\label{prop:Irrep}
  The Verma module $V^{\cal I}$ over $\calg$ for integer $ \ell $ is irreducible if $ \rho_1^2 + \rho_2^2 \neq 0. $ 
  On the other hand if $ \rho_1 = \rho_2 = 0, $ then $ V^{\cal I} $ is reducible. 
\end{proposition}
\begin{proof}
  The first part of the proposition is a corollary of Theorem~\ref {theo:KacDet}. 
To show the second part it is enough to prove the existence of a singular vector in $ V^{\cal I}. $ 
It is not difficult to verify that if $ \rho_1 = \rho_2 = 0 $ then 
$ (P^1_{-1} \pm i P^2_{-1}) \ket{0} \in V^{\cal I} $ is a singular vector. Thus $ V^{\cal I} $ is 
reducible.  
\end{proof}

\begin{remark}
  The statement in Proposition~\ref{prop:Irrep} is also independent of the central charges so that 
it is true for the algebras $ \alg_{\ell} $ and $ \calg. $ 
\end{remark}

\begin{remark}
Proposition~\ref{prop:Irrep} is  a sharp contrast to the finite dimensional Galilean conformal 
algebras in $(2+1)$ dimensional spacetime with an integer spin $ \ell$.  
For those finite dimensional algebras, some Verma modules for certain nonvanishing highest weights are 
reducible~\cite{Phil,AiKimSeg}.  
This is because of the central extensions of the finite dimensional algebras which make $ \langle\; P^i_r \; \rangle $ noncommutative. 
\end{remark}

%
%
%
\setcounter{equation}{0}
\section{Verma modules over $ \calg$ for half-integer $ \ell$}
\label{sec:Verma2}

 In this section the spin parameter $ \ell $ is assumed to be a positive 
half-integer. We shall show that all the Verma modules over $ \calg $ for a 
half-integer $\ell$ is reducible. The algebra $ \calg $ is spanned by 
$ L_n, J_n \ (n \in {\mathbb Z}) $ and $ P^i_r \ (r \in {\mathbb Z}+\frac{1}{2}). $ 
To define the Verma modules over $ \calg $ we introduce the triangular 
type decomposition
  \bea
    \calg &=& \calg^{-} \oplus \calg^0 \oplus \calg^+ 
    \nn \\
    &=& 
    \langle\; L_{-n}, J_{-n}, P_{-r}^i \; \rangle \; \oplus \;
    \langle\; L_{0}, J_{0} \; \rangle \; \oplus \;
    \langle\; L_{n}, J_{n}, P_{r}^i \; \rangle,
    \qquad
    n,\; r > 0 \nn
  \eea
The highest weight vector $\ket{0}$ is defined as usual:
   \bea
    & &  L_n \ket{0} = J_n \ket{0} = P_r^i \ket{0} = 0, \quad n,\; r > 0, \nn \\
    & &  L_0 \ket{0} = h \ket{0}, \quad J_0 \ket{0} = \mu \ket{0}.
      \nn
   \eea
Then the Verma modules over $ \calg $ is defined by
\[
   V^{\cal J} = U(\calg^{-}) \ket{0},
\]
where $ {\cal J} = \{\; h, \mu,  c_1, c_2, c_3 \; \}. $ 
\begin{proposition}
\label{prop:Irrep2}
  $ \ket{v_{\pm}} = (P^1_{-1/2} \pm i P^2_{-1/2}) \ket{0} \in V^{\cal J} $ are singular vectors. 
  Thus all the Verma module over $\calg $ for a half-integer $ \ell $ is reducible. 
\end{proposition}
\begin{proof}
 It is easy to verify the following:
\bea
  & & L_0 \ket{v_{\pm}} = \Bigl(h-\frac{1}{2}\Bigr) \ket{v_{\pm}}, \qquad 
      J_0 \ket{v_{\pm}} = (\mu \mp i) \ket{v_{\pm}},
   \nn \\
  & & X \ket{v_{\pm}} = 0, \quad {}^{\forall}X \in \calg^+.
   \nn
\eea
Thus $ \ket{v_{\pm}} $ are singular vectors in $ V^{\cal J}. $ All the Verma module has the singular vector so that 
$ V^{\cal J}$ is reducible. 
\end{proof}
\begin{remark}
 The results in Proposition~\ref{prop:Irrep2} is independent of the central charges. Thus they are true for 
 the algebras $ \alg_{\ell} $ and $ \calg. $ 
\end{remark}

%
%
%
\setcounter{equation}{0}
\section{Concluding remarks}
\label{sec:CR}

 We studied the central extensions and irreducibility of Verma modules for 
the infinite dimensional GCA introduced by Martelli 
and Tachikawa. We focused on the algebras defined in $(2+1)$ dimensional spacetime 
and showed that the subalgebra spanned by 
$ \langle\; P_r^i \; \rangle_{r \in {\mathbb Z} + \ell}^{i=1,2} $ 
does not have any kind of central extensions (Theorem~\ref{theo:ext}). 
This makes a sharp contrast to the finite dimensional counterparts. 
The finite dimensional GCA has the exotic central extension if $ \ell $ is an integer, 
and it has the \textit{mass} central extension if $ \ell $ is a half-integer. 
The Abelian nature of the subalgebra $ \langle\; P_r^i \; \rangle $ also causes 
some differences from the finite dimensional GCA in the representation theory. 
The results in Proposition~\ref{prop:Irrep} and \ref{prop:Irrep2} and their independence 
of the central charges are mainly due to this Abelian nature. 

  The results of present work will open a way of further study of representation theory of $ \calg. $ It is an important problem to obtain irreducible highest weight modules when the Verma module is reducible. 
Unitarity of the representation is of physical importance. Thus to find the conditions for unitary 
irreducible representations is a work to be done.  
Another interesting problems is a computation of characters which might have a relation to statistical 
systems. One may expect to establish some connections to the vertex operator algebras as in~\cite{ZhangDong,LiuGaoZhu,GaoJiangPei2,Radobolja}.

%
\section*{Acknowledgements} 
The authors are grateful to Yufeng Pei for pointing out the error in Theorem~\ref{theo:ext} of the first 
version of this manuscript. 
N.A. was supported by a grants-in-aid from JSPS (Contract No.23540154).

%
%
%

\end{document}